%%%%%%%%%%%%%%%%%%%%%%%%%%%%%%%%%%%%%%%%%%%%%%%%%%%%%%%%%%%%%%%%%%%%%%%%%%%%%%%%
%2345678901234567890123456789012345678901234567890123456789012345678901234567890
%        1         2         3         4         5         6         7         8

\documentclass[letterpaper, 10 pt, conference]{ieeeconf}  % Comment this line out
                                                          % if you need a4paper
%\documentclass[a4paper, 10pt, conference]{ieeeconf}      % Use this line for a4
                                                          % paper

\IEEEoverridecommandlockouts                              % This command is only
                                                          % needed if you want to
                                                          % use the \thanks command
\overrideIEEEmargins
% See the \addtolength command later in the file to balance the column lengths
% on the last page of the document

% The following packages can be found on http:\\www.ctan.org
\usepackage{graphicx} % for pdf, bitmapped graphics files
\usepackage{amsmath} % assumes amsmath package installed
\usepackage{amssymb}  % assumes amsmath package installed
\usepackage{xcolor}
\usepackage{cite}
\newtheorem{theorem}{Theorem}
\newtheorem{proposition}[theorem]{Proposition}
\newtheorem{lemma}[theorem]{Lemma}

\title{\LARGE \bf
Distributionally Robust Tuning of Anomaly Detectors \\in Cyber-Physical Systems with Stealthy Attacks**
}

\author{Venkatraman Renganathan*, Navid Hashemi*, Justin Ruths, and Tyler H. Summers% <-this % stops a space
\thanks{* These authors contributed equally.}
 \thanks{**This work was partially supported by the Army Research Office and was accomplished under Grant Number: W911NF-17-1-0058 and the Air Force Office of Scientific Research under award number FA2386-19-1-4073.}% <-this % stops a space
\thanks{The authors are with the Department of Mechanical Engineering, The University of Texas at Dallas, 800 W. Campbell Rd, Richardson, TX, USA. Email: {\tt\small (vrengana, nxh150030, tyler.summers, jruths)@utdallas.edu}}%
}

\begin{document}
\maketitle
\begin{abstract}
Designing resilient control strategies for mitigating stealthy attacks is a crucial task in emerging cyber-physical systems. In the design of anomaly detectors, it is common to assume Gaussian noise models to maintain tractability; however, this assumption can lead the actual false alarm rate to be significantly higher than expected. We propose a distributionally robust anomaly detector for noise distributions in moment-based ambiguity sets. We design a detection threshold that guarantees that the actual false alarm rate is upper bounded by the desired one by using generalized Chebyshev inequalities. Furthermore, we highlight an important trade-off between the worst-case false alarm rate and the potential impact of a stealthy attacker by efficiently computing an outer ellipsoidal bound for the attack-reachable states corresponding to the distributionally robust detector threshold. We illustrate this trade-off with a numerical example and compare the proposed approach with a traditional chi-squared detector.
\end{abstract}

%%%%%%%%%%%%%%%%%%%%%%%%%%%%%%%%%%%%%%%%%%%%%%%%%%%%%%%%%%%%%%%%%%%%%%%%%%%%%%%%%%
\section{INTRODUCTION} \label{sec_intro}
Many emerging complex dynamical networks, from critical infrastructures to industrial cyber-physical systems (CPS) and various biological networks, are increasingly able to be instrumented with new sensing and actuation capabilities. These networks comprise growing webs of interconnected feedback loops and must operate efficiently and resiliently in dynamic and uncertain environments. As these systems become large, devising automated methods for detecting anomalies (such as component failures or malicious attacks) are critical for smooth and efficient operation. Such critically important cyber-physical networks have become an attractive target to attackers. These systems are large and complex enough -- and often not monitored well enough -- for attackers to manipulate the system without being detected and cause significant damage \cite{mo2012integrity, nateghi_paper, Cardenas_Sipolini,Pasqualetti_1, Cardenas,Mo_3,Carlos_Justin3,navid,Mo_2,karl_johannson_tcst}. 

To simplify analysis and design, often such complex cyber-networks are modeled as a discrete-time linear time invariant system. In such systems, noises are often modelled as Gaussian in the name of tractability. However, this can lead to a significant miscalculation of probabilities and risk if the underlying processes behave differently, for example due to various nonlinearities or malicious attacks. In the context of attacks, it is possible for an attacker to modify the sensor outputs and effectively generate aggressive and strategic noise profiles to sabotage the operation of the system. In stochastic optimization, these limitations are being recognized and addressed in the emerging area of distributionally robust optimization (DRO) \cite{dr_wiesemann}.

% Traditional methods for solving optimization problems under uncertainty using stochastic and dynamic programming approaches are being challenged due to inherent ambiguity in specifying probabilistic models in practice and the so-called optimizer's curse arising out of bias in statistical data.

Here, we propose to use distributionally robust optimization (DRO) methods to improve modeling and reduce false alarm rates in cyber-physical networks. DRO enables modelers to explicitly incorporate inherent ambiguity in probability distributions into optimization problems. This more realistic account of uncertainty mitigates the so-called optimizer’s curse, where overly strong assumptions about uncertainty distributions can lead to poor out-of-sample performance. Moreover, several important DRO problems can be tractably solved. The central idea is to represent uncertainty through an ambiguity set as a family of possibly infinitely many probability distributions consistent with the available raw data or prior structural information, and to model the decision-making process as a game against ``nature''. In this game, the modeler first selects a decision with the goal to optimize his objective or maximize the probability of constraint satisfaction, in response to which nature selects a distribution from within the ambiguity set with the goal to inflict maximum harm to the modeler. This game mimics the adversarial nature of attacks and makes DRO an obvious choice to analyze worst case behavior not just from noise/nature, but from corruption by an attacker.

A model-based approach to attack detection uses a detector that raises alarms when there is a large enough discrepancy between the actual measurements and an estimate of the system, a statistic termed the residual. The detector's sensitivity can be increased by decreasing the threshold of detection, but there is an inherent tradeoff between sensitivity and the rate at which false alarms are generated. Keeping false alarms to a manageable level requires decreasing sensitivity and the tuning of the detector threshold is typically informed by the distribution of the residual. When this distribution is only known to an ambiguity set, traditional tools no longer suffice to select the threshold and so we turn to a distributionally robust approach. In the context of attacks, the tuning of the detector has a direct implication on the effect an attacker can have while still remaining stealthy. 

%A popular tool for detecting attacks in a control system is the chi-squared detector. The detector's sensitivity can be increased by decreasing the threshold of detection, but there is an inherent tradeoff between sensitivity and the rate at which false alarms are generated. It is possible for a powerful attacker to execute \textit{stealthy attacks} such as a zero alarm attack where the attacker can induce the detectors to not raise alarms during the attack \cite{navid}. Authors in \cite{karl_johannson_tcst} had Gaussian assumptions for the noises while studying about worst case linear attack strategy for control systems but we make no such assumptions in this research. The knowledge about the probability distribution of the detector used by the modeler is of prime importance to the attacker in executing such stealthy attacks. However, if the modeler uses a distributionally robust approach to handle the uncertainty, the attacker will not know the exact probability distribution of the detector, making it extremely difficult for him to plan and execute such stealthy attacks that will inflict maximum harm to the system. 

\textit{Contributions:} Our main contributions are: 1) design of a detection threshold that guarantees that the actual false alarm rate is upper bounded by the desired one by using generalized Chebyshev inequalities (Proposition \ref{prop_dr_alfa}); 2) formulation of ellipsoidal outer bounds on the reachable sets of the system corresponding to attacks despite the exact distribution of the noise being unknown using distributional robustness (Proposition \ref{prop:reachableset}); 3) demonstrating via a numerical example an important trade-off between the worst-case false alarm rate and the potential impact of a stealthy attacker. Specifically, we use generalized Chebyshev inequalities \cite{boyd_sdp,chen_chebyshev}, to find the detector threshold values so that the anomaly detector we design achieves a desired worst-case false alarm rate. Finally, using the optimum detector threshold values, we derive the outer bounding ellipsoid that contains the reachable set induced by a stealthy attack by solving a convex optimization problem. While anomaly and attack detection is widely stuied in CPS literature, our distributionally robust approach marks the novel contribution of this paper.  

The rest of the paper is organized as follows. Section \ref{sec_problem_formulation} formulates the problem statement and explains the distributionally robust approach and further using the generalized Chebyshev bounds to design anomaly detectors. Section \ref{sec_reach_sets} describes the convex optimization problem formulation to find an ellipsoidal bound on the reachable sets obtained using distributionally robust tuned detector. Section \ref{sec_numerical_simulation} discusses about the numerical results using an empirical system and highlights the trade-offs observed between the attacker's capability and being distributionally robust against any noise distribution. Finally, Section \ref{sec_conclusion} concludes and summarizes future research directions.

%%%%%%%%%%%%%%%%%%%%%%%%%%%%%%%%%%%%%%%%%%%%%%%%%%%%%%%%%%%%%%%%%%%%%%%%%%%%%%%%%%%%%%%%%%%%%%%%%%%%%%%%%%%%%%%%%%%%%%
\section{Problem Formulation and Distributionally Robust Anomaly Detector Tuning} \label{sec_problem_formulation}
We model an uncertain cyber-physical system using a stochastic discrete-time linear time invariant
(LTI) system
\begin{equation}
\begin{aligned}
    x_{t+1} &= A x_t + B u_t + w_t, \quad t \in \mathbb{N} \\
    y_t &= C x_t + v_t,
\end{aligned}
\end{equation}
where $x_t \in \mathbb{R}^n$ is the system state at time $t$, $u_t \in \mathbb{R}^m$ is the input at time $t$, $A$ is the dynamics matrix, $B$ is the input matrix. The process noise $w_t$ is modeled using a zero-mean random vector independent and identically distributed across time with covariance matrix $\Sigma_{w}$. The output $y_t \in \mathbb{R}^p$ aggregates a linear combination, given by the observation matrix $C \in \mathbb{R}^{p \times n}$, of the states and the sensor noise, $v_k$ modeled using a zero-mean random vector independent and identically distributed across time with covariance matrix $\Sigma_{v}$. The distributions $P_w$ of $w_t$ and $P_v$ of $v_t$ are unknown (and not necessarily Gaussian) and will be assumed to belong to the ambiguity sets $\mathcal{P}^w, \mathcal{P}^v$ of distributions respectively. With the second moments of the process noise and sensor noise denoted by $\Sigma_{w} = \mathbf{E} [w_t w^{\top}_t]$ and $\Sigma_{v} = \mathbf{E} [v_t v^{\top}_t]$ being known, we can then define the moment based ambiguity sets as follows,
\begin{equation} \label{eqn_ambig_set_v}
    \mathcal{P}^{v} = \{ P_{v} \, | \, \, \mathbf{E}v_t = 0, \, \, \mathbf{E} [v_t v^{\top}_t] = \Sigma_{v} \}, \\ 
\end{equation}
\begin{equation} \label{eqn_ambig_set_w}
    \mathcal{P}^{w} = \{ P_{w} \, | \, \, \mathbf{E}w_t = 0, \, \, \mathbf{E} [w_t w^{\top}_t] = \Sigma_{w} \}. \\ 
\end{equation}
We assume that the pair $(A,C)$ is detectable and $(A,B)$ is stabilizable. In this work, we consider the scenario that the actual measurement $y_t$ can be corrupted by an additive attack, $\delta_t \in \mathbb{R}^p$. Due to this additive attack, the output of the system fed to the controller becomes
\begin{equation}
    \Bar{y}_t = y_t + \delta_t = C x_t + v_t + \delta_t.
\end{equation}
To leverage a fault-detection approach, we require an estimator of some type to produce a prediction of the system behavior. In this work, we use the steady state Kalman filter
\begin{equation}
    \hat{x}_{t+1} = A \hat{x}_t + B u_t + L (\Bar{y}_t - C \hat{x}_t),
\end{equation}
where $\hat{x}_t \in \mathbb{R}^n$ is the estimated state. The observer gain $L$ is designed to minimize the steady state covariance matrix $P$ in the absence of attacks, where the estimation error is
\begin{equation}
    \begin{aligned}
    e_t &= x_t - \hat{x}_t, \quad \text{and} \\
    P &:= \lim_{t \rightarrow \infty} P_t := \mathbf{E}[e_t e^{\top}_t]. 
    \end{aligned}
\end{equation}
Recall that $P$ is the solution of an algebraic Ricatti equation.
Since $(A,C)$ is assumed to be detectable, the existence of such a steady state covariance matrix $P$ is guaranteed. Now, we define a residual sequence, $r_t$ as the difference between what we actually receive $\Bar{y}_t$ and expect to receive $C \hat{x}_t$ as, 
\begin{equation}
    \begin{aligned}
        r_{t} &= \Bar{y}_t - C \hat{x}_t = C e_t + v_t + \delta_t,
    \end{aligned}
\end{equation}
and the estimation error evolves according to 
\begin{equation} \label{eqn_error_dynamics}
    \begin{aligned}
        e_{t+1} &= (A - LC) e_t - L v_t - L \delta_t.
    \end{aligned}
\end{equation}
When there is no attack, that is, $\delta_t = 0,$ the residual sequence $r_t$ falls according to a zero mean distribution with covariance 
\begin{equation} \label{eqn_residual}
    \Sigma_{r} = \mathbf{E}[r_t r^{\top}_t] = C P C^{\top} + \Sigma_{v}.
\end{equation}
Note that $r_t$ is a zero-mean random vector independent and identically distributed across time with covariance matrix $\Sigma_{r}$ and the distribution $P_r$ of $r_t$ is unknown (and not necessarily Gaussian) but belongs to an ambiguity set $\mathcal{P}^r$ whose second moment, $\Sigma_r$, can be calculated from \eqref{eqn_residual}.
%and will be assumed to belong to an ambiguity set $\mathcal{P}^r$. We assume that it is possible to obtain the moments data of the $r_t$ random variable after conducting a series of experiments involving the noise data from their corresponding ambiguity sets. Hence, the ambiguity set of the residual, $\mathcal{P}^{r}$ is given by,   
\begin{equation} \label{eqn_ambig_set_r}
    \mathcal{P}^{r} = \{ P_{r} \, | \, \, \mathbf{E}r_t = 0, \, \, \mathbf{E} [r_t r^{\top}_t] = \Sigma_{r} \}.
\end{equation} 
\subsection{Distributionally Robust Optimization Approach}
Distributionally robust optimization approaches can be categorized based on the form of the ambiguity set. There are several different parameterizations, including those based on moments, support, directional derivatives \cite{dr_goh}, and Wasserstein balls \cite{dr_peyman}. For example, a moment-based ambiguity set includes all distributions with a fixed moments up to some order (e.g., fixed first and second moments), and Wasserstein-based ambiguity sets include a ball of distributions within a given Wasserstein distance from some base distribution (such as an empirical distribution on a training data set). We will focus here on the moment-based ambiguity set as explained in \cite{dr_wiesemann}, though other parameterizations are interesting and relevant for bounding reachable sets using ellipsoids will be pursued in future work. Similar in structure to a chi-squared detector though tuned using a distributionally robust approach, we define a quadratic distance measure $z_t$ to be sensitive to changes in the variance of the distribution as well as the expected value, 
\begin{equation}
    z_t = r^{\top}_t \Sigma^{-1}_{r} r_t.
\end{equation}
Notice that $z_t$ is also a random variable expressed as the sum of the squares of $r_t$. In the case that the residual is Gaussian, $z_t$ would be a chi-squared random variable (hence the name of the chi-squared detector).

\subsection{Anomaly Detector Thresholds and False Alarm Rates}
For a given threshold $\alpha \in \mathbb{R}_{>0}$ and the distance measure $z_t = r^{\top}_t \Sigma^{-1}_{r} r_t$, 
\begin{equation}
    \begin{aligned}
        \begin{cases} z_t \leq \alpha, &\text{no alarm} \\
        z_t > \alpha, &\text{alarm: } t^{*} = t.
        \end{cases}
    \end{aligned}
\end{equation}
alarm time(s) $t^{*}$ are produced. Due to the infinite support of the sensor noise $v_t$, the distance measure $z_t$ will also have infinite support. Thus even in the absence of attacks, the detector is expected to generate false alarms because some values drawn from the distance measure distribution will exceed the threshold $\alpha$. Usually the detectors are designed for a desired false alarm rate, $\mathcal{A}$, through an appropriate choice of threshold $\alpha$. This, however, requires knowing the distribution governing the detector random variable, $z_t$. If the distribution of the quadratic distance measure $z_t$ is known, then it is possible to extract the optimum threshold values $\alpha$ from the knowledge of the distribution. Suppose for instance, if the traditional chi-squared detector is used as in \cite{navid}, with threshold $\alpha \in \mathbb{R}_{>0}$, $r_t \sim \mathcal{N}(0,\Sigma_{r})$. Then corresponding to the desired false alarm rate $\mathcal{A} = \mathcal{A}^{*}$, we can obtain the optimum threshold as 
\begin{equation} \label{eqn_chi_squared_threshold}
\alpha = \alpha^{*} := 2 \mathbf{P}^{-1}\left(1-\mathcal{A}^{*}, \frac{p}{2}\right),
\end{equation}
where $\mathbf{P}^{-1}(\cdot, \cdot)$ denotes the inverse regularized lower incomplete gamma function.

\subsection{Tuning the Threshold via Generalized Chebyshev Bounds}
When the complete distribution is not available, tuning methods like the one above may design thresholds that generate actual false alarm rates significantly higher than what is desired. With the distributionally robust approach, we aim to achieve a desired false alarm rate $\mathcal{A}$, and the challenge is to select $\alpha$ such that
\begin{equation} \label{eqn_r_ineq}
    \begin{aligned}
    \sup_{P_{r} \in \mathcal{P}^{r}} P_{r}(r^{\top}_{t} \Sigma^{-1}_{r} r_{t} \leq \alpha) &= 1 - \mathcal{A}.
    \end{aligned}
\end{equation}
\begin{proposition} \label{prop_dr_alfa}
Given a desired false alarm rate $\mathcal{A}$ and the true distribution of the system residual $r_t$ belonging to an ambiguity set $\mathcal{P}^{r}$ defined as in \eqref{eqn_ambig_set_r}, the distributionally robust detector threshold $\alpha$ satisfying \eqref{eqn_r_ineq} is given by 
\begin{align} \label{eqn_DR_alpha}
    \alpha = \frac{p}{\mathcal{A}},
\end{align}
where $p$ denotes the number of outputs.
\end{proposition}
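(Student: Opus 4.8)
The plan is to collapse the distributionally robust chance constraint \eqref{eqn_r_ineq} to a one-dimensional moment problem for the nonnegative scalar $z_t = r_t^{\top}\Sigma_r^{-1}r_t$ and then close it with a Markov / generalized-Chebyshev argument. First I would note that $z_t \ge 0$ almost surely and that its mean is the \emph{same} for every distribution in the ambiguity set \eqref{eqn_ambig_set_r}, since that set pins down the first two moments of $r_t$: using linearity of expectation and the cyclic property of the trace,
\begin{equation*}
\mathbf{E}[z_t] = \mathbf{E}\,\trace\!\bigl(\Sigma_r^{-1} r_t r_t^{\top}\bigr) = \trace\!\bigl(\Sigma_r^{-1}\mathbf{E}[r_t r_t^{\top}]\bigr) = \trace\!\bigl(\Sigma_r^{-1}\Sigma_r\bigr) = \trace(I_p) = p .
\end{equation*}

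Next I would apply Markov's inequality to $z_t$: for any $\alpha > 0$ and any $P_r \in \mathcal{P}^r$, $P_r(z_t > \alpha) \le \mathbf{E}[z_t]/\alpha = p/\alpha$, hence $P_r(z_t \le \alpha) \ge 1 - p/\alpha$. Taking the supremum of the false-alarm probability $P_r(z_t > \alpha)$ over the ambiguity set therefore produces the worst-case guarantee $\sup_{P_r\in\mathcal{P}^r}P_r(z_t>\alpha)\le p/\alpha$, and choosing $\alpha$ so that this bound meets the prescribed level, $p/\alpha = \mathcal{A}$, gives $\alpha = p/\mathcal{A}$ and certifies that the actual false alarm rate never exceeds $\mathcal{A}$ (note this forces $\alpha \ge p$, consistent with $\mathcal{A}\le 1$).

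The one nontrivial point — and the step I expect to be the main obstacle — is showing this threshold is the \emph{right} one, i.e. that the Markov bound is not conservative so that no smaller $\alpha$ works; equivalently, that the supremum in \eqref{eqn_r_ineq} is attained (or approached) inside $\mathcal{P}^r$. For this I would exhibit an extremal distribution: place mass $1-q$ at the origin and mass $q$ on the random point $r_t = q^{-1/2}\Sigma_r^{1/2} u$, where $u$ is zero-mean with $\mathbf{E}[uu^{\top}] = I_p$ (e.g.\ $u = \pm\sqrt{p}\,e_i$ with equal probabilities $1/(2p)$, so that $\|u\|^2 = p$). A direct check gives $\mathbf{E}[r_t] = 0$ and $\mathbf{E}[r_t r_t^{\top}] = \Sigma_r$, so this law lies in $\mathcal{P}^r$; on the nonzero atom $z_t = q^{-1}\|u\|^2 = p/q$, and letting $q \uparrow \mathcal{A}$ drives $P_r(z_t > p/\mathcal{A}) = q \uparrow \mathcal{A}$, proving tightness. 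Alternatively, one can invoke the SDP-duality form of the generalized Chebyshev inequality from \cite{boyd_sdp,chen_chebyshev}, whose optimal value for a single quadratic form is exactly $\min\{1,\,p/\alpha\}$, and read off the same threshold $\alpha = p/\mathcal{A}$.
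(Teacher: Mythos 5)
Your proof is correct, and it is more self-contained than the paper's. The paper establishes \eqref{eqn_Chebyshev_inequality} purely by citation: it invokes the analytical solution of the generalized (multivariate) Chebyshev problem from \cite{chen_chebyshev} (with \cite{boyd_sdp} providing the general SDP machinery) and then equates the bound $p/\alpha$ with $\mathcal{A}$. You instead reduce the vector problem to the scalar nonnegative variable $z_t$, observe that $\mathbf{E}[z_t]=\trace(\Sigma_r^{-1}\Sigma_r)=p$ is the same for every law in $\mathcal{P}^r$, and close with Markov's inequality --- an elementary derivation of exactly the bound the paper imports. More importantly, your extremal two-atom-type construction (mass $1-q$ at the origin, mass $q$ spread on $q^{-1/2}\Sigma_r^{1/2}u$ with $\|u\|^2=p$) proves sharpness, i.e.\ that the worst-case false alarm rate at $\alpha=p/\mathcal{A}$ is exactly $\mathcal{A}$ and no smaller threshold satisfies \eqref{eqn_r_ineq}. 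This step is genuinely needed to read \eqref{eqn_r_ineq} as an equality, since the paper's displayed inequality \eqref{eqn_Chebyshev_inequality} is only one-sided; the paper outsources this to the cited sharp bound $\min\{1,p/\alpha\}$, whereas you verify it directly. (Like the paper, you interpret \eqref{eqn_r_ineq} as a statement about the worst-case exceedance probability, i.e.\ $\sup_{P_r}P_r(z_t>\alpha)=\mathcal{A}$ or equivalently $\inf_{P_r}P_r(z_t\le\alpha)=1-\mathcal{A}$; the ``$\sup$'' as literally written in \eqref{eqn_r_ineq} is a typo in the paper, and your note that the supremum is approached as $q\uparrow\mathcal{A}$ (attained with non-strict inequality) is the correct level of care.) In short: the paper's route buys brevity by leaning on known Chebyshev results; yours buys a self-contained, elementary argument that also supplies the tightness the paper only asserts.
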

\begin{proof}
% Chebyshev's inequality for random variables provides a lower bound for the fraction of the data in a given distance with respect to the data mean when the variance is known. 
A sharp lower bound on the probability of a set defined by quadratic inequalities, given the first two moments of the distribution can be efficiently computed using a semidefinite program described in \cite{boyd_sdp}, which generalizes Chebyshev's inequality to vector random variables. However, for the particular form of the residual set in \eqref{eqn_r_ineq} there is an analytical solution \cite{chen_chebyshev}. Using this result, we can obtain the Chebyshev bound $\alpha$ satisfying \eqref{eqn_r_ineq} as follows,
\begin{equation} \label{eqn_Chebyshev_inequality}
    \begin{aligned}
    \sup_{P_{r} \in \mathcal{P}^{r}} P_{r}(r^{\top}_{t} \Sigma^{-1}_{r} r_{t} \geq \alpha) &\leq \frac{p}{\alpha}, 
    \end{aligned}
\end{equation} 
where $p$ is the number of outputs. Comparing \eqref{eqn_Chebyshev_inequality} with the \eqref{eqn_r_ineq}, we can see that
\begin{equation*} 
    \begin{aligned}
    \frac{p}{\alpha} = \mathcal{A} \implies \alpha = \frac{p}{\mathcal{A}}. 
    \end{aligned}
\end{equation*}

\end{proof}

%%%%%%%%%%%%%%%%%%%%%%%%%%%%%%%%%%%%%%%%%%%%%%%%%%%%%%%%%%%%%%%%%%%%%%%%%%%%%%%%%%%%%%%%%%%%%%%%%%%%%%%%%%%%%%%%%%%%%%

\section{Attack-Reachable Set Bounds}\label{sec_reach_sets}
The threshold of the anomaly detectors limit what the attacker is able to accomplish if he/she seeks to remain undetected. These attack models require strong attacker knowledge and access, namely we assume that the attacker has perfect knowledge of the system dynamics, the Kalman filter, control inputs, and measurements. In addition, the attacker has read and write access to all the sensors at each time step. In this section, we describe a stealthy attack by an attacker and define reachable set to quantify the system impact due to the attack and process noise.  
\subsection{Zero Alarm Attacks}
Zero-alarm attacks generate attack sequences that maintain the distance measure at or below the threshold $z_t \leq \alpha$, so that no alarms are raised during the attack. To satisfy this condition we define the attack as 
\begin{equation} \label{eqn_attack_seq}
    \delta_t = - Ce_t - v_t + \Sigma^{\frac{1}{2}}_{r} \bar{\delta}_t 
\end{equation}
where $\Sigma^{\frac{1}{2}}_{r}$ is the symmetric square root of $\Sigma_{r}$ and $\bar{\delta}_t \in \mathbb{R}^p$ is any vector that the attacker has the freedom to design such that $\bar{\delta}^{\top}_t \bar{\delta}_t \leq \alpha$. With such an attack sequence, the distance measure becomes
\begin{equation} \label{eqn_zero_alarm_attack}
    z_t = r^{\top}_t \Sigma^{-1}_{r} r_t = \bar{\delta}^{\top}_t \bar{\delta}_t \leq \alpha.
\end{equation} 
\subsection{Computing Ellipsoidal Bound for the Reachable Set}
In order to compare the effects of an attack, we require a metric to quantify the impact of it. A popular choice to quantify system impact due to a disturbance is the set of states reachable by the action of the disturbance. We intend to find a best bound for $\alpha$ that will result in the required user prescribed false alarm rate $\mathcal{A}$, keeping in mind the noise defining $z_t$ may fall under any distribution from its corresponding moment-based ambiguity set. The generalized Chebyshev bound $\alpha$ addressing the above problem is obtained by \eqref{eqn_DR_alpha}. When the distribution of the residual is not known exactly, the lack of information inherently leads to more conservatism, making the value of $\alpha$ larger, and therefore, the magnitude of zero-alarm attacks larger. 

When there is an attack as in \eqref{eqn_attack_seq}, the evolution of the system dynamics can be written in a new and reduced form where the measurement noise $v_t$ gets cancelled and thus resulting in the state and estimation error dynamics of the system as a function of $w_t$ and $\bar{\delta}_t$. 
%So, when we look at the complete reachable state of the system, we can decompose the contributions due to system noise $w_t$ and due to attack $\bar{\delta}_t$ separately. Using the superposition principle of linear systems, the estimation error $e_t$ can be written as $e_t = e^w_t + e^{\delta}_t$, where $e^w_t$ denotes the part of $e_t$ that is driven by noise and $e^{\delta}_t$ is the part driven by the attacks. Similarly, the state of the system $x_t$ can be written as $x_t = x^w_t + x^{\delta}_t$, where $x^w_t$ denotes the part of $x_t$ that is driven by noise and $x^{\delta}_t$ is the part driven by the attacks. 
Defining the joint state as $\xi_t=\begin{bmatrix}x_t &  e_t\end{bmatrix}^{\top}$ with input $\zeta_t=\begin{bmatrix}w_t & \bar{\delta}_t\end{bmatrix}^{\top}$, we can write the joint evolution as
\begin{equation}\label{eqn_stacked_states}
\xi_{t+1}=\hat{A}\xi_{t}+\hat{B} \zeta_t, 
\end{equation}
where $\hat{A}=\begin{bmatrix}A+BK & -BK \\ 0 & A\end{bmatrix}$ and $\hat{B}=\begin{bmatrix}I & 0\\I & -L\Sigma_r^{1/2}\end{bmatrix}$. \\
Since some distributions in the ambiguity set of $\mathcal{P}^w$ may have unbounded support, it is necessary for us to truncate them at some confidence level (since unbounded noise would, theoretically, lead to unbounded reachable sets, albiet for infinitesimal probabilities). We follow the distributionally robust approach using \eqref{eqn_DR_alpha} and \eqref{eqn_Chebyshev_inequality} to obtain noise threshold $\bar{w}$ satisfying,  
\begin{equation}\label{eqn_DRnoise_threshold}
    \sup_{P_{w} \in \mathcal{P}^{w}} P_{w}(w^{\top}_{t} \Sigma^{-1}_{w} w_{t} \leq \bar{w}) = 1 - \mathcal{A}.
\end{equation}
The reachable set of interest, driven by the ellipsoidally bounded inputs $w_t$ and $\bar{\delta}_t$, is
\begin{equation}
\mathcal{R}_x = \left\{  x_t=[I_n,\, 0_{n\times n}]\xi_t \ \left|\
    \begin{aligned}
        &\xi_{t+1}=\hat{A}\xi_{t}+ \hat{B} \zeta_t, \\
        &\xi_1=\mathbf{0},\ \delta_t^\top\delta_t \leq \alpha,\\
        &w_t^\top\Sigma_w^{-1}w_t \leq \bar{w},\ \forall t\in\mathbb{N}
    \end{aligned}
    \right. 
    \right\}.
\end{equation}
% The symmetry caused due to the zero initial conditions results in the reachable set of the estimation error driven by noise equal to the reachable set of the states driven by noise. For $t \in \mathbb{N}$, that is, the asymptotic reachable set due to noise is defined as 
% \begin{align} \label{eqn_noise_reach_set}
%     \mathcal{R}^{w}_{x} = \mathcal{R}^{w}_{e} = \left \{ e^{w}_{t} \in \mathbb{R}^{n} \bigg|\,  \eqref{eqn_stacked_states}, w^{\top}_{t} \Sigma^{-1}_{\bar{w}} w_{t} \leq 1 \right \},
% \end{align}
% where $\Sigma^{-1}_{\bar{w}} = \frac{1}{\bar{w}} \Sigma^{-1}_{w}$. Similarly for attacks, $t \geq t^{*}$,
% \begin{align} \label{eqn_attack_reach_set}
%     \mathcal{R}^{\delta}_{e} &= \left \{ e^{\delta}_{t} \in \mathbb{R}^{n} \bigg|\,  \eqref{eqn_stacked_states}, \bar{\delta}^{\top}_{t} \Sigma^{-1}_{\bar{\delta}} \bar{\delta}_{t} \leq 1 \right \}, \\
%     \mathcal{R}^{\delta}_{x} &= \left \{ x^{\delta}_{t} \in \mathbb{R}^{n} \, \big| \, e^{\delta}_{t} \in \mathcal{R}^{\delta}_{e} \right \},
% \end{align}
% where $\Sigma^{-1}_{\bar{\delta}} = \frac{1}{\alpha} I_{p}$.
% \subsection{Finding the Minimum Volume Bounding Ellipsoid}
We use Linear Matrix Inequalities, for some positive definite matrix $Q_x$, to derive outer ellipsoidal bounds of the form 
\begin{equation}
    \mathcal{R}_{x} \subseteq \mathcal{E}(Q_{x}) = \{ x_t \, | \, x^{\top}_{t} \mathcal{P} x_t \leq 1\}, 
\end{equation}
where the ellipsoid $\mathcal{E}$ has minimum volume and shape matrix $Q_x$. We define $\mathcal{P}_\xi$ as the inverse of the shape matrix of the ellipsoidal bound for the $\xi$ reachable set, 
\begin{equation} \label{eqn_psi_reach_set_matrix}
    \mathcal{P}^{-1}_\xi = Q_{\xi} = \begin{bmatrix}Q_x & Q_{xe} \\ Q_{xe}^{\top} & Q_e\end{bmatrix}.
\end{equation}
The following proposition will introduce the optimization problem to find the minimum volume ellipsoidal bound for the reachable set. 
\begin{proposition} \label{prop:reachableset} Given the system matrices $A,B,C$, gain matrices $K, L$, a positive semi-definite matrix $\mathcal{F}$, attack input threshold $\alpha$, system noise threshold $\bar{w}$, and constant $a \in [0,1)$ the following convex optimization generates the smallest reachable set ellipsoidal bound $\mathcal{E}(Q_x)$,
\begin{equation} \label{eqn_sdp_alpha_bar}
    \begin{aligned}
        &\underset{a_1, a_2, Q_x , Q_{xe}, Q_e}{\text{minimize}} & & \textbf{tr}(Q_x)\\
        &\text{\quad subject to } & &  a_1 + a_2 \geq a, \quad  a_1,a_2 \in [0,1)\\ 
        & & & Q_\xi \geq 0 , \  \mathcal{F} \geq 0.
    \end{aligned}
\end{equation}
\end{proposition}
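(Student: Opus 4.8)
The plan is to obtain $\mathcal{E}(Q_x)$ as the projection onto the $x$-coordinates of a \emph{robustly positively invariant} ellipsoid $\mathcal{E}(Q_\xi)=\{\xi\mid \xi^\top \mathcal{P}_\xi\xi\le 1\}$, with $\mathcal{P}_\xi = Q_\xi^{-1}$, for the lifted dynamics \eqref{eqn_stacked_states} subject to the two input constraints $\bar\delta_t^\top\bar\delta_t\le\alpha$ and $w_t^\top\Sigma_w^{-1}w_t\le\bar w$, and then to show that the search for the trace-smallest such ellipsoid is exactly the semidefinite program \eqref{eqn_sdp_alpha_bar}.

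First I would establish the containment/projection step. Since $\xi_1=\mathbf{0}\in\mathcal{E}(Q_\xi)$ for every $Q_\xi\succ 0$, a one-line induction on $t$ shows that if $\mathcal{E}(Q_\xi)$ is invariant --- i.e. $\xi^\top\mathcal{P}_\xi\xi\le 1$ together with the input bounds forces $(\hat A\xi+\hat B\zeta)^\top\mathcal{P}_\xi(\hat A\xi+\hat B\zeta)\le 1$ --- then $\xi_t\in\mathcal{E}(Q_\xi)$ for all $t$, so the lifted reachable set is contained in $\mathcal{E}(Q_\xi)$. Applying the selector $[I_n\ 0_{n\times n}]$ and invoking the standard Schur-complement fact that the orthogonal projection of $\{\xi\mid\xi^\top Q_\xi^{-1}\xi\le 1\}$ onto the first $n$ coordinates equals $\{x\mid x^\top Q_x^{-1}x\le 1\}$, where $Q_x$ is the leading block of $Q_\xi$ in \eqref{eqn_psi_reach_set_matrix}, yields $\mathcal{R}_x\subseteq\mathcal{E}(Q_x)$.

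Next I would convert the invariance requirement into an LMI. Written out, it asks that a quadratic form in $(\xi,w,\bar\delta)$ be nonpositive on the intersection of three quadratic constraints, so the $\mathcal{S}$-procedure supplies a sufficient condition: there exist nonnegative multipliers --- these are the scalars $a_1,a_2$, while the prescribed constant $a\in[0,1)$ plays the role of the (fixed) contraction rate on the $\mathcal{P}_\xi$-block and enters through the budget/margin constraint $a_1+a_2\ge a$ with $a_1,a_2\in[0,1)$ --- such that a block matrix assembled from $\hat A,\hat B,\mathcal{P}_\xi,\Sigma_w^{-1},\alpha,\bar w$ is negative semidefinite; fixing $a$ in $[0,1)$ is what guarantees a uniform contraction margin so the resulting bound is finite and non-vacuous. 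That matrix contains the quadratic-in-$\mathcal{P}_\xi$ term $\hat{\mathbf A}^\top\mathcal{P}_\xi\hat{\mathbf A}$ with $\hat{\mathbf A}=[\hat A\ \hat B]$, together with the bilinear term $a\,\mathcal{P}_\xi$, so it is not yet convex in the chosen variables. The convexification is the crux of the argument: apply a Schur complement to the $\hat{\mathbf A}^\top\mathcal{P}_\xi\hat{\mathbf A}$ term, then perform a congruence transformation by $\mathrm{diag}(Q_\xi,I,I,\dots)$ to rewrite everything in $Q_\xi=\mathcal{P}_\xi^{-1}$; because $a$ is fixed, the only surviving product $a\,Q_\xi$ is affine, so the whole thing collapses to a single matrix inequality $\mathcal{F}\succeq 0$ that is \emph{linear} in $(Q_\xi,a_1,a_2)$. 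Combined with $Q_\xi\succeq 0$ and the affine constraints on $a_1,a_2$, and noting that the objective $\trace(Q_x)=\trace([I_n\ 0]\,Q_\xi\,[I_n\ 0]^\top)$ is linear in $Q_\xi$ (a convex surrogate for the ellipsoid volume), the program \eqref{eqn_sdp_alpha_bar} is a genuine convex SDP, and any feasible point produces a valid outer ellipsoid $\mathcal{E}(Q_x)\supseteq\mathcal{R}_x$ by the first two steps, while its optimum is the trace-smallest such bound.

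The main obstacle is precisely this last convexification --- clearing the coupling between the $\mathcal{S}$-procedure multipliers, the contraction term $a\,\mathcal{P}_\xi$, and the matrix inverse $\mathcal{P}_\xi=Q_\xi^{-1}$ by a suitable Schur complement and congruence so that the final constraint is an honest LMI in $(Q_\xi,a_1,a_2)$; the invariance-implies-containment induction and the ellipsoid-projection identity are routine by comparison. Note also that only sufficiency of the $\mathcal{S}$-procedure is needed here, since the proposition claims an outer (not exact) reachable-set bound.
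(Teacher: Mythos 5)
Your proposal is correct in substance and it arrives at the same SDP by essentially the same mechanics as the paper: a quadratic storage function on the lifted state $\xi_t$, one multiplier per ellipsoidal input bound, a Schur complement to eliminate the terms quadratic in $\mathcal{P}_\xi$, and a congruence to pass from $\mathcal{P}_\xi$ to $Q_\xi = \mathcal{P}_\xi^{-1}$ so that, with $a$ fixed, the constraint $\mathcal{F}\succeq 0$ is linear in $(Q_\xi,a_1,a_2)$ and $\trace(Q_x)$ is a linear objective. Where you genuinely differ is in how the key containment step is justified: the paper does not run an invariance/S-procedure argument but instead invokes Lemma \ref{fatlemma} (from \cite{Murguia2018SecurityMO}), which turns the global quadratic inequality $V_{t+1}-aV_t-\sum_i(1-a_i)\zeta_{it}^{\top}W_i\zeta_{it}\le 0$ with $\sum_i a_i\ge a$ into the iterated bound $V_t\le\frac{N-a}{1-a}$ (here $N=2$), and then rescales $\mathcal{P}_\xi=\frac{1-a}{2-a}\tilde{\mathcal{P}}_\xi$ so that this level set becomes the unit ellipsoid; you instead prove invariance of the unit $\mathcal{P}_\xi$-ellipsoid directly by the S-procedure and induct from $\xi_1=\mathbf{0}$. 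The two are equivalent up to the reparameterization in which the actual S-procedure multipliers on the normalized input constraints are proportional to $1-a_1$ and $1-a_2$ (which is precisely why the budget constraint reads $a_1+a_2\ge a$ rather than a sum-$\le$ condition on the multipliers themselves), so your route buys a self-contained, citation-free derivation, and it is also more careful than the paper about the projection step, i.e., that $\mathcal{E}(Q_\xi)$ projects onto $\mathcal{E}(Q_x)$ with $Q_x$ the leading block of $Q_\xi$, which the paper asserts without argument. Two small touch-ups for your write-up: the congruence must carry $Q_\xi$ on every block in which $\mathcal{P}_\xi$ appears --- in the paper's ordering $\Theta=\mathrm{diag}(Q_\xi,Q_\xi,I)$, not $\mathrm{diag}(Q_\xi,I,I,\dots)$, otherwise off-diagonal blocks such as $\hat{A}^{\top}\mathcal{P}_\xi$ are not cleared --- and you should state explicitly the identification of the multipliers with $1-a_1,\,1-a_2$ (suitably normalized), since calling $a_1,a_2$ themselves the multipliers does not match the direction of the constraint $a_1+a_2\ge a$.
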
 
\noindent \begin{proof}
In order to prove the proposition, we leverage the results from \cite{Murguia2018SecurityMO} and Lemma 1 in \cite{Carlos_Justin3} (restated below). 
\begin{lemma} \label{fatlemma}
Let $V_t$ be a positive definite function, $V_1 = 0$, $\zeta_{it}^{\top} W_i \zeta_{it} \leq 1$, $i = 1 \dots N$, where $N$ is the number of available inputs and $W_i$ is the inverse of shape matrix for the ellipsoidal bound over input $\zeta_{it}$ and is positive definite. Then, it can be shown that $V_t \leq \frac{N-a}{1-a}$, if there exists a constant $a \in (0,1)$ and $a_i \in (0,1), \forall i = 1,\dots,N$ such that $\sum_{i=1}^{N} a_i \geq a$ and
\begin{equation} \label{eqn_pos_def_cdtn} 
V_{t+1} - aV_t -\sum_{i=1}^{N} (1-a_i) \zeta_{it}^{\top} W_i\zeta_{it} \leq 0.
\end{equation}
\end{lemma}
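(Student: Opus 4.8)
The plan is to prove the claimed uniform bound by induction on $t$, after first collapsing the inequality \eqref{eqn_pos_def_cdtn} into a scalar affine recurrence for $V_t$. Starting from \eqref{eqn_pos_def_cdtn}, I would rearrange to $V_{t+1} \leq a V_t + \sum_{i=1}^{N} (1-a_i)\,\zeta_{it}^{\top} W_i \zeta_{it}$. The ellipsoidal input bounds $\zeta_{it}^{\top} W_i \zeta_{it} \leq 1$ together with the positivity of each coefficient $1-a_i > 0$ (which holds since every $a_i \in (0,1)$) let me upper bound each summand by $1-a_i$, giving $V_{t+1} \leq a V_t + \sum_{i=1}^{N}(1-a_i) = a V_t + \bigl(N - \sum_{i=1}^{N} a_i\bigr)$. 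Finally, the hypothesis $\sum_{i=1}^{N} a_i \geq a$ yields the clean one-dimensional recurrence $V_{t+1} \leq a V_t + (N-a)$.

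With this recurrence in hand, I would observe that the candidate bound $\frac{N-a}{1-a}$ is exactly the fixed point of the affine map $V \mapsto aV + (N-a)$, since solving $V = aV + (N-a)$ gives $V(1-a) = N-a$. Because $a \in (0,1)$ makes this map a contraction with positive slope, iterates starting below the fixed point remain below it. Concretely, the base case uses $V_1 = 0 \leq \frac{N-a}{1-a}$, which is valid because $N \geq 1 > a$ and $1-a > 0$ force the right-hand side to be positive. For the inductive step, assuming $V_t \leq \frac{N-a}{1-a}$, I would substitute into the recurrence to obtain $V_{t+1} \leq a \cdot \frac{N-a}{1-a} + (N-a) = (N-a)\bigl(\frac{a}{1-a} + 1\bigr) = (N-a)\cdot\frac{1}{1-a} = \frac{N-a}{1-a}$, closing the induction and establishing $V_t \leq \frac{N-a}{1-a}$ for all $t$.

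The argument is almost entirely mechanical, so I do not anticipate a genuine obstacle; the only place requiring care is the \emph{direction} of the inequalities when assembling the recurrence. One must verify that each coefficient $1-a_i$ is strictly positive before replacing $\zeta_{it}^{\top} W_i \zeta_{it}$ by its upper bound $1$, since a nonpositive coefficient would flip the inequality, and one must apply $\sum_i a_i \geq a$ with the correct sign so that $N - \sum_i a_i \leq N-a$. Recognizing that the stated bound is precisely the equilibrium of the contraction, rather than some looser transient estimate, is what makes the inductive step close \emph{exactly} rather than merely approximately, and this is the conceptual crux of the proof.
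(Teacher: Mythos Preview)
Your argument is correct: collapsing \eqref{eqn_pos_def_cdtn} into the scalar recurrence $V_{t+1}\le aV_t+(N-a)$ via the sign checks on $1-a_i$ and $\sum_i a_i\ge a$, and then closing by induction against the fixed point $\frac{N-a}{1-a}$, is exactly the natural proof and every step is justified as you wrote it.

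The paper, however, does not supply its own proof of this lemma; it simply states the result and refers the reader to \cite{Murguia2018SecurityMO} for the argument. So there is no in-paper derivation to compare against beyond noting that your self-contained induction is the standard way such invariant-set/Lyapunov recurrence bounds are established, and it matches what one would find in the cited source.
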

\noindent The proof of the lemma is available in \cite{Murguia2018SecurityMO}. To derive the reachable set bound of $\xi_t$, let us define the positive definite function required in \eqref{eqn_pos_def_cdtn} as follows,
\begin{equation} \label{eqn_pos_def_fn}
V_t=\xi_t^{\top} \Tilde{\mathcal{P}}_{\xi} \xi_t \leq \frac{2-a}{1-a},\ \  \Tilde{\mathcal{P}}_{\xi}>0,\ \  \mathcal{P}_\xi=\frac{1-a}{2-a} \Tilde{\mathcal{P}}_{\xi}.    
\end{equation}
Substituting \eqref{eqn_pos_def_fn} in \eqref{eqn_pos_def_cdtn}, we get
\begin{equation}\label{eq:errorlemma1}
V_{t+1} - aV_t - \frac{1-a_2}{\alpha} \bar{\delta}_t^{\top} \bar{\delta}_t -\frac{1-a_1}{\bar{w}}w_t^{\top}\Sigma_w^{-1}w_t \leq 0,
\end{equation}
and further solving it using Schur complement results in the following linear matrix inequality
\begin{equation}\label{eqn_boundLMI1}
    \mathcal{H}=\begin{bmatrix}a\mathcal{P}_\xi &  \hat{A}^{\top}\mathcal{P}_\xi &0\\
    \mathcal{P}_\xi \hat{A} & \mathcal{P}_\xi & \mathcal{P}_\xi  \hat{B}\\
    0 &  \hat{B}^{\top}\mathcal{P}_\xi & \hat{W}
    \end{bmatrix} \geq 0,
\end{equation}
where, $\hat{W} = \frac{1-a}{2-a}W_{a}$ and  $W_{a}=\begin{bmatrix}\frac{1-a_1}{\bar{w}}\Sigma_w^{-1} & 0 \\ 0 & \frac{(1-a_2)}{\alpha}I_{p}\end{bmatrix}$. To replace $\mathcal{P}_\xi$ with $Q_\xi$, we use diagonal transformation matrix $\Theta$ such that, $\mathcal{F} = \Theta^{\top} \mathcal{H} \Theta$ where, $\Theta =$ diag$(Q_\xi, Q_\xi, I)$. Now, \eqref{eqn_boundLMI1} equivalently gets transformed into  
\begin{equation} \label{eqn_boundLMI2}
\mathcal{F} = \begin{bmatrix}a Q_\xi &  Q_\xi {A}^{\top} & 0 \\ A Q_\xi & Q_\xi & B \\ 0 &  {B}^{\top} & \hat{W} \end{bmatrix} \geq 0.
\end{equation}
Thus $\mathcal{F} \geq 0$ with $Q_{\xi} \geq 0$ will ensure that the convex optimization problem given by \eqref{eqn_sdp_alpha_bar} will result in an ellipsoid bounding the reachable set. There are many choices for the objective function to tighten the outer ellipsoid bound, but the trace criteria tends to find compact ellipsoids without a large principal axis. 
\end{proof}
%%%%%%%%%%%%%%%%%%%%%%%%%%%%%%%%%%%%%%%%%%%%%%%%%%%%%%%%%%%%%%%%%%%%%%%%%%%%%%%%%%%%%%%%%%%%

\section{Numerical Simulation} \label{sec_numerical_simulation}
In this section, we demonstrate the performance of the distributionally robust fault detector when there is no attack on the system. Using the same detector we present our analysis of the effects of stealthy attacks on the system by studying the ellipsoids that bound the reachable states. We consider the following system for this study with the detector tuned to a false alarm rate $\mathcal{A} = 0.05$  ($5$\%):
\begin{equation*}
    \begin{aligned}
        A &= \begin{bmatrix} 0.84 & 0.23 \\ -0.47 & 0.12 \end{bmatrix}, B = \begin{bmatrix} 0.07 & -0.32 \\ 0.23 & 0.58\end{bmatrix}, C = \begin{bmatrix} 1 & 0 \\ 2 & 1 \end{bmatrix} \\
        K &= \begin{bmatrix} 1.404 & -1.402 \\ 1.842 & 1.008 \end{bmatrix}, L = \begin{bmatrix} 0.0276 & 0.0448 \\ -0.01998 & -0.0290 \end{bmatrix}, \\
        \Sigma_v &= 2 I_p, \Sigma_w = \begin{bmatrix} 0.045 & -0.011 \\ -0.011 & 0.02 \end{bmatrix}.
    \end{aligned}
\end{equation*}  
\subsection{Advantages of Distributional Robustness}
The purpose of this first simulation is to demonstrate the effectiveness of distributionally robust approach while comparing it with the traditional chi-squared detector approach which assumes a normal distribution for the noises. We assume that there is no attack on the system so that the estimation error evolves according to \eqref{eqn_error_dynamics} but with $\delta_t = 0$. We run an extensive Monte-Carlo simulation to derive an empirical probability of the distance measure, $z_t$, lying above a threshold. To demonstrate the effectiveness of our proposed approach, we investigate two detectors: 1) a detector tuned assuming the noises are Gaussian distributed with threshold $\alpha_{\chi^{2}}$ using \eqref{eqn_chi_squared_threshold}; and 2) a detector tuned making no assumption about the distribution of the noises with threshold $\alpha_{DR}$ using \eqref{eqn_DR_alpha}. For demonstration, we test these detectors with two scenarios: 1) a scenario where both noises are Gaussian and 2) a scenario where both noises are distributed according to Student's $t$ distribution (having heavy tails) with the degree of freedom $\nu = 5$. In both cases the noises belong to the ambiguity sets $\mathcal{P}^v$ and $\mathcal{P}^w$ characterized by zero-mean and respective covariances $\Sigma_v$ and $\Sigma_w$ listed above.  Finally, we estimate the false alarm rate of each detector under each noise scenario, by evaluating the probability of $z_t$ falling above  each detector threshold.
\begin{figure}[t]
    \centering
    \includegraphics[scale=0.18]{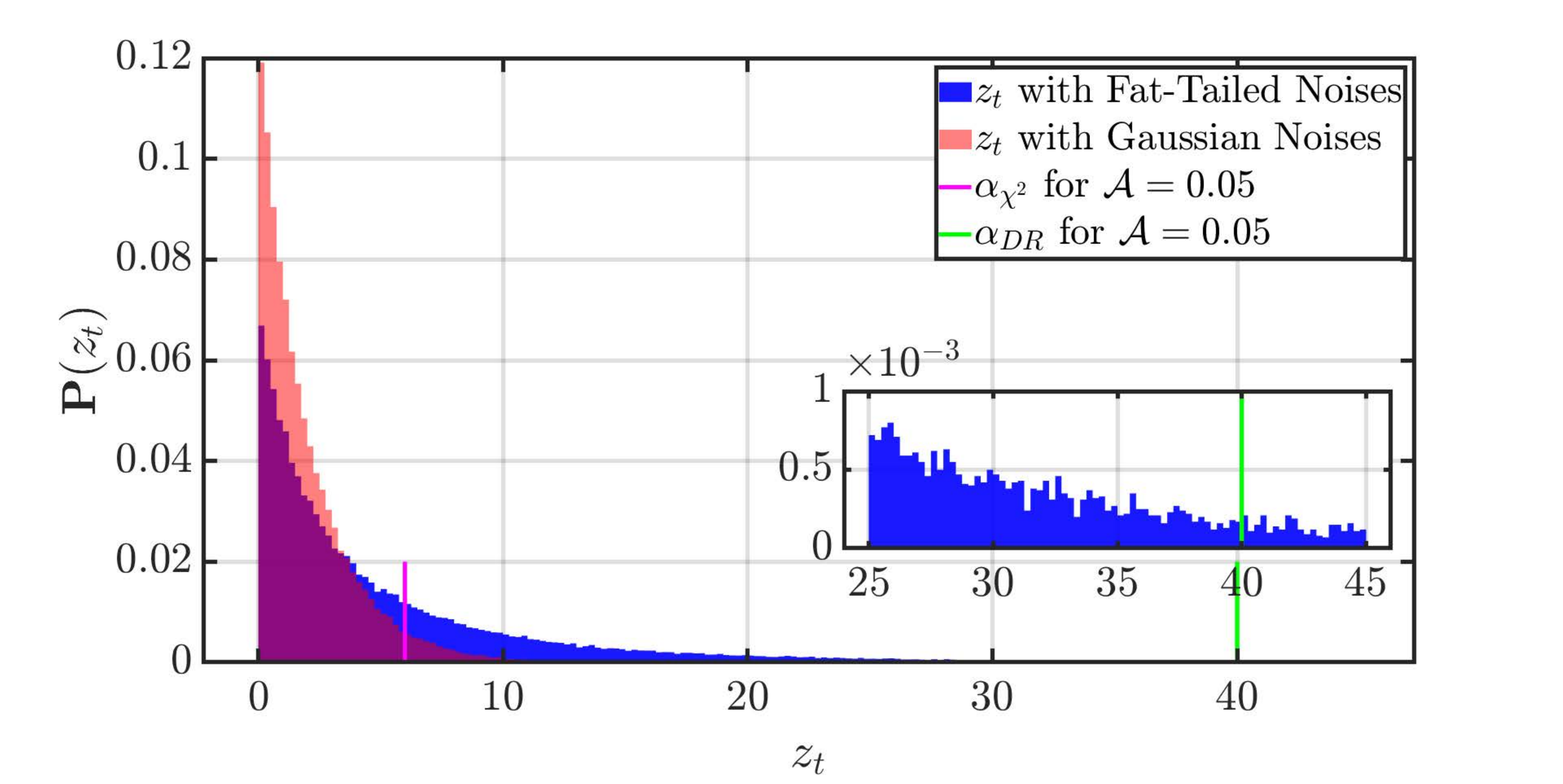}
    \caption{The histogram plot shows the probability of the quadratic distance measure random variable $z_t$ in two noise scenarios where in the first case the system is driven by Gaussian noises and in the second case the system is driven by fat-tailed noises. Two detectors are tuned using \eqref{eqn_chi_squared_threshold} and \eqref{eqn_DR_alpha} respectively. While the chi-squared detector generated $5\%$ false alarms for the Gaussian noise scenario, it generated $29.81 \% $ when the noises were fat-tailed. The distributionally robust detector generated $0 \%$ false alarms for the Gaussian noise scenario and $0.83\%$ false alarms when the noises were fat-tailed. The inner plot shows the zoomed in part for $z_t \in [25,45]$.}
    \label{fig_normalizedresidual}
\end{figure}
The result of a Monte-Carlo simulation with 100,000 trials (for each noise scenario) is shown as an histogram plot of the distance measure of $z_t$ in Fig. \ref{fig_normalizedresidual}. In the first noise scenario, when the noises are Gaussian, the distributionally robust detector provides a conservative threshold bound with $0\%$ false alarm rate. The chi-squared detector resulted in the user prescribed false alarm rate of $5\%$. 

In the second noise scenario, we see an increased false alarm rate with the traditional chi-squared detector. Due to the wrong assumption for system noise and sensor noise, the chi-squared detector generated $29.81\%$ false alarms resulting in significantly miscalculated risk. The distributionally robust detector generated $0.83\%$ false alarms. Since our goal in tuning the detector is to create a monitor that generates false alarms no more than $\mathcal{A} = 5\%$ of the time, we see that the DR approach achieves this aim, while the traditional chi-squared approach does not.

\textbf{Remark 1:} While the low false alarm rate of the distributionally robust detector looks appealing, it allows a malicious attacker to execute stealthy attacks with larger impact which will be demonstrated in the following simulations. 

\textbf{Remark 2:} The performance of the distributionally robust detector can be improved significantly if higher order moments or other structural information about the distributions (e.g., symmetry or unimodality) of the residual sequence are also utilized. This will result in sharper probability estimates as mentioned in \cite{bertsimas_popescu} and hence in tighter threshold values.

\subsection{Cost of Being Distributionally Robust} 
\begin{figure}[t]
    \centering
    \includegraphics[scale = 0.18]{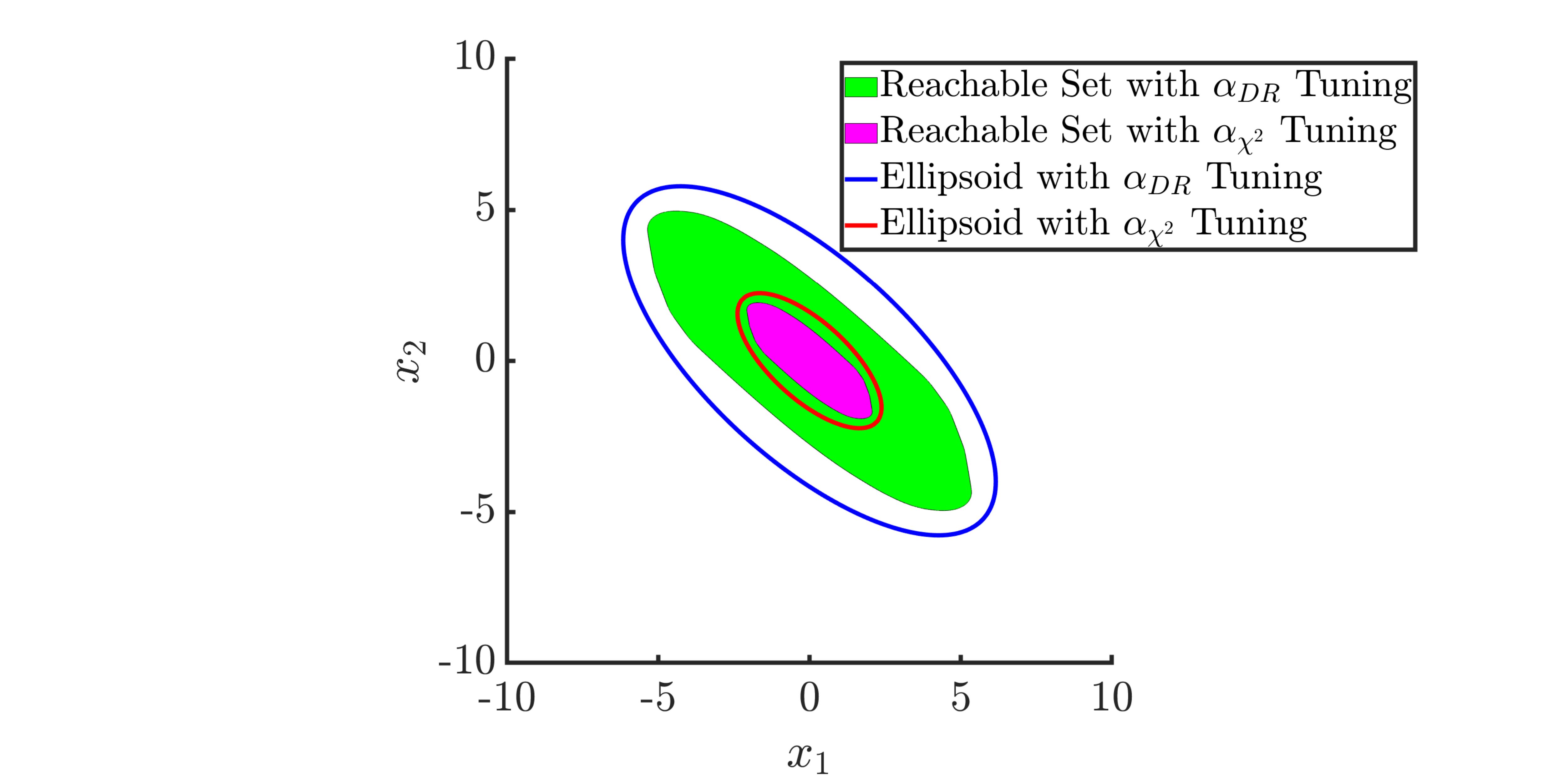}
    \caption{Two empirical reachable states when the system is driven by zero-alarm attacks and system noise: one larger (green) corresponding to a distributionally robust tuned detector and another smaller (magenta) corresponding to a chi-squared tuned detector. Their respective outer bounding ellipsoids, which can be efficiently calculated with Proposition \ref{prop:reachableset}, in blue and red. %With the distributionally robust tuning, it is clear that the reachable set is larger than the one obtained using chi-squared tuning indicating that the attacker can remain stealthy and create attacks that can have larger impact. This explains the trade-off between the increased attacker capabilities and being robust to any distribution.
    }
    \label{fig_all_in_all} 
\end{figure}
\begin{figure}[t]
    \centering
    \includegraphics[scale=0.18]{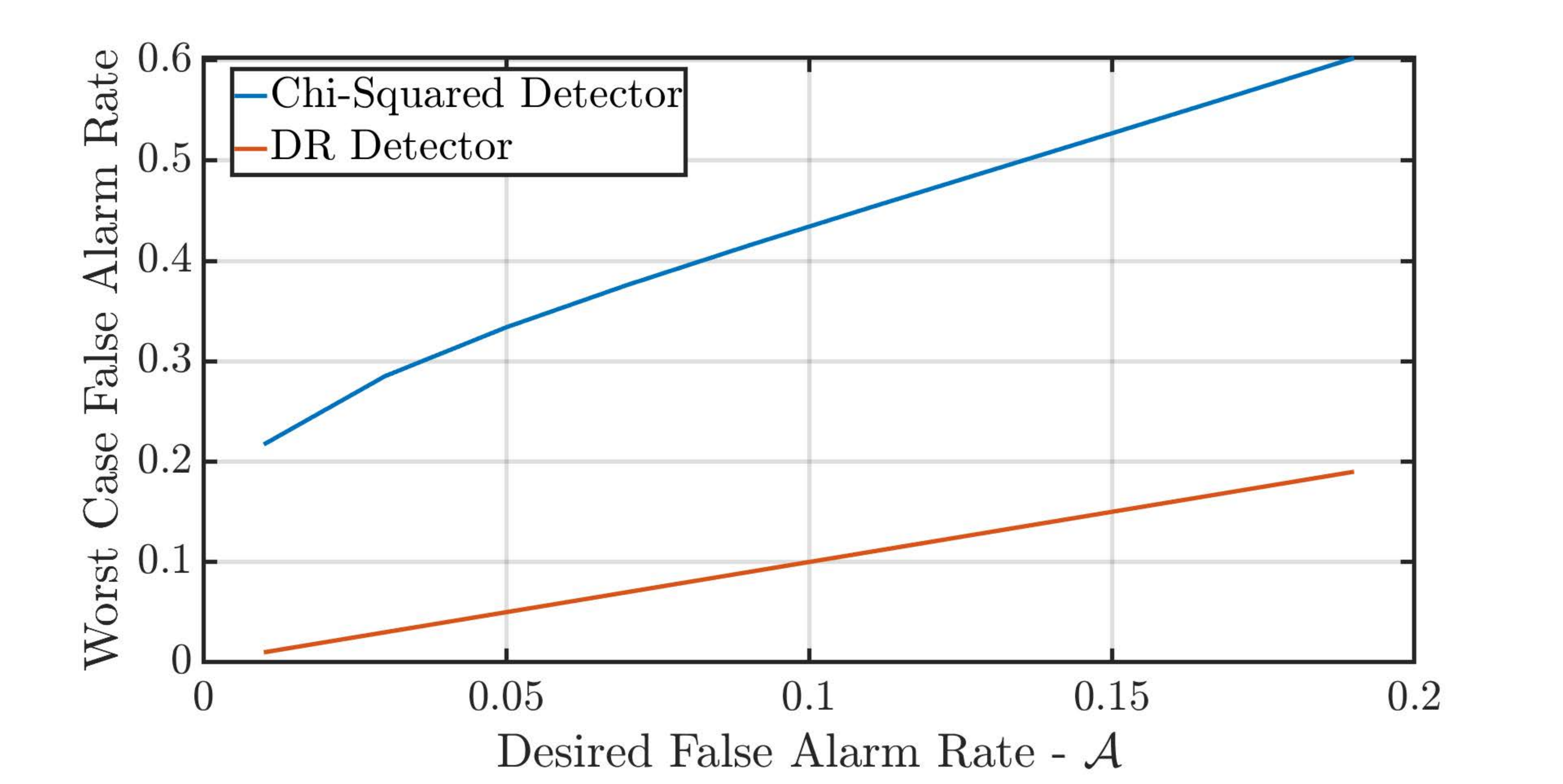}
    \caption{The variation of worst case false alarm rate as a function of the user prescribed desired false alarm rate is shown here. The markers in both the lines along the same vertical axis correspondingly represent the same threshold. It is evident that a superlinear behavior (blue curve) is observed depicting that worst case false alarm rate is bigger than the desired false alarm rate for a given threshold.}
    \label{fig_worst_A}
\end{figure}
\begin{figure}[t]
    \centering
    \includegraphics[scale=0.18]{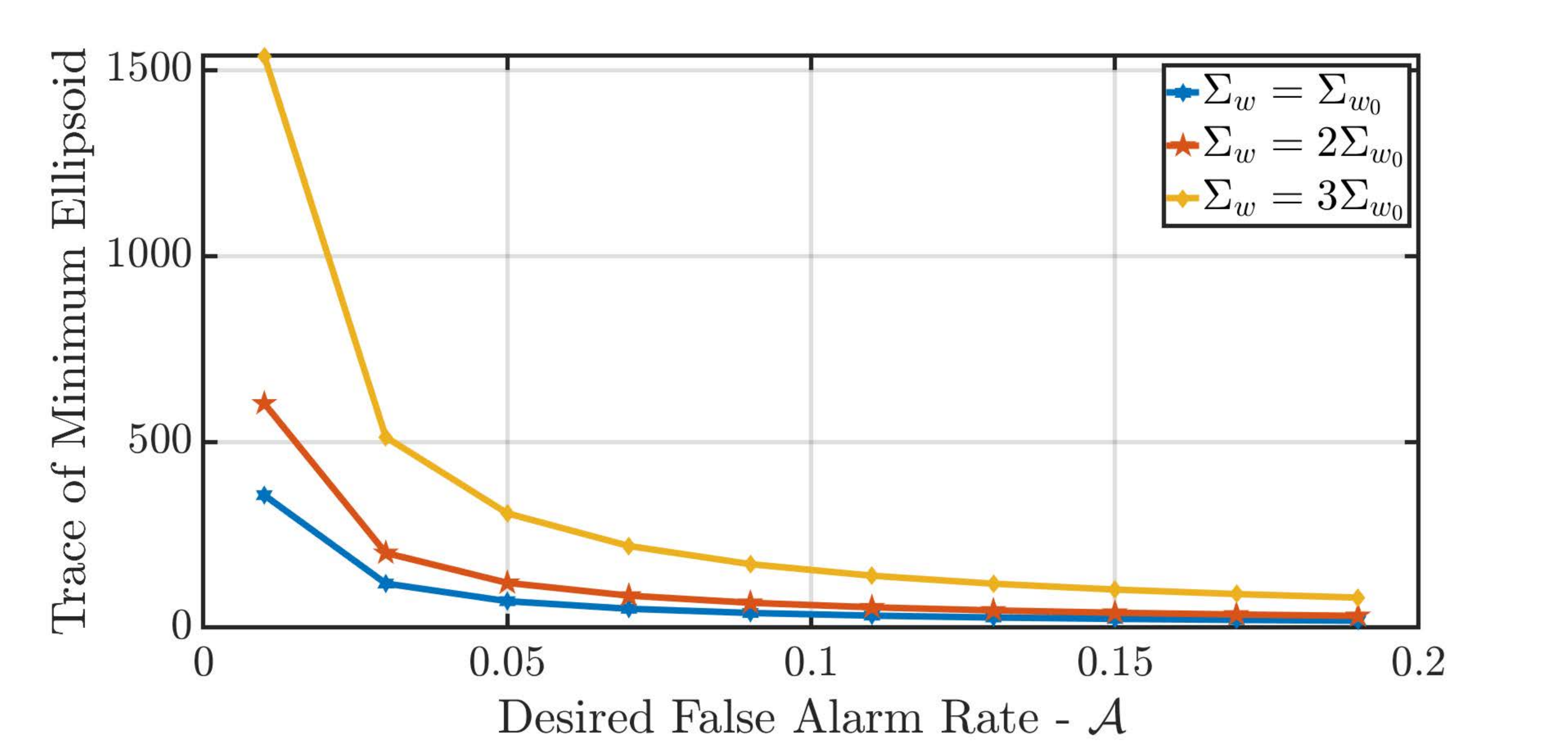}
    \caption{The trade-off shown here indicates that as the distributionally robust detector is tuned for larger false alarm rates, the trace of the bounding ellipsoid obtained using the $\alpha_{DR}$ tuning decreases. This trend is inversely proportional, indicating tuning the distributionally robust detector for a very small false alarm rate is very expensive in terms of the size of the reachable set.  %Also, the same trend is observed when the process noise covariance is varied. As expected the trace of the corresponding bounding ellipsoid would be larger for states driven by the process noise having higher covariance.
    }
    \label{fig_vol_vs_alarm_rate}
\end{figure}
%Fault detector forms the important module in analyzing control systems under attack. The outer bounding ellipsoids limit the capability of the attacker from trivially designing an arbitrary attack input. 
We will now demonstrate the trade-off between being distributionally robust against any noise distribution and the increased attacker capabilities caused by the conservative robust tuning. We quantify this trade-off through reachable set analysis.  We continue to consider the two detectors: one tuned assuming the noises are Gaussian and one tuned without any distributional assumptions through the robust tuning presented in this paper. We study the reachable set that an attacker is able to accomplish with each detector, while remaining stealthy, i.e., the attack input satisfies the zero-alarm stealthiness criteria in \eqref{eqn_zero_alarm_attack}, and, therefore, do not raise alarms. We quantify the reachable set through the outer ellipsoidal bounds found by the optimization in Proposition \ref{prop:reachableset}.

Fig. \ref{fig_all_in_all} shows the outcome of the two zero-alarm attacks, one made stealthy to the chi-squared detector and one made stealthy to the distributionally robust detector. The chi-squared ellipsoid is obtained as a function of the detector threshold computed from \eqref{eqn_chi_squared_threshold}. Similarly, the distributionally robust ellipsoid is obtained as a function of the distributionally robust detector threshold computed from \eqref{eqn_DR_alpha}. As $\alpha_{DR}\geq \alpha_{\chi^2}$, we anticipate the attacker to be able to make a larger impact under the distributionally robust detector. In general, computing the exact (empirical) reachable set is computationally intensive, however, we plot each in this example for reference. 
%The attacker can execute zero alarm attacks as long as they satisfy the zero-alarm stealthiness criteria in \eqref{eqn_zero_alarm_attack}, and thereby raising no alarms. For the first setting where the empirical  reachable set is calculated under chi-squared tuned detector, the chi-squared ellipsoid gives a tight bound and thereby limiting the attacker within the smaller ellipse. However, in the second setting where the reachable set is calculated under distributionally robust tuned detector, the chi-squared ellipsoid will precisely raise the alarm once the reachable states exceed the bound, while the distributionally robust ellipsoid with its conservative bound paves way for the attacker with increased available state space due to the enhanced Chebyshev threshold, which the the attacker can leverage. Precisely, this is what is shown in Fig. \ref{fig_all_in_all}. 

When there is no attack, the traditional chi-squared detector can generate high rates of false alarms in worst case noise settings while the distributionally robust detector is guaranteed to remain below the designed false alarm rate. Fig. \ref{fig_worst_A} compares for each detector (chi-squared and distributionally robust) the false alarm rate they are tuned for (horizontal axis) with the worst possible false alarm rate they may generate under arbitrary noise distributions (vertical axis). It is evident that the worst case false alarm rate is much larger than the desired false alarm rate when a chi-squared detector is used. In contrast the distributionally robust detector's worst case false alarm rate is exactly what is designed.

However, this performance comes at the cost of increased attacker capabilities when the system is under attack. 
%This is the price that the modeler has to pay for being robust to any distribution in the attack free setting while the attacker gets increased state space to drive the system states further away during the attacked setting. 
Fig. \ref{fig_vol_vs_alarm_rate} depicts the trade-off observed between the desired false alarm rate, $\mathcal{A}$, and the the trace of the ellipsoid that bounds the reachable set obtained using $\alpha_{DR}$ tuning. It is evident that as $\mathcal{A}$ increases, the trace of the distributionally robust bounding ellipsoid decreases. The same trend pertains even when the process noise covariance is varied, where the trace of the corresponding bounding ellipsoid is larger for the states driven by the process noise having higher covariance. This inversely proportional trend suggests it is increasingly more costly to tune the distributionally robust detector for smaller false alarm rates.

\section{Conclusion \& Future Outlook}\label{sec_conclusion} 
We have proposed a distributionally robust approach to tuning anomaly detectors by using the generalized Chebyshev moments to find a threshold that guarantees the false alarm rate will not exceed a desired value. We have demonstrated our ideas with a numerical example that emphasizes the effectiveness of the distributionally robust approach over the traditional chi-squared detector approach, however the advantages come at the price of increased attacker capabilities. Our future work will seek to ameliorate this downside as it is possible to obtain sharper probability estimates for the threshold values if we include the higher order moments as explained in \cite{bertsimas_popescu}.   

\bibliographystyle{IEEEtran}
\bibliography{security}

\end{document}